


\documentclass[aps,pra,groupedaddress,superscriptaddress,twocolumn,nofootinbib,showpacs]{revtex4-1}

\usepackage{amsmath}
\usepackage{color}
\usepackage{amssymb}
\usepackage{amsthm}
\usepackage[hyperindex,breaklinks]{hyperref}



\newcommand{\CPLX}{\mathbb{C}}


\newcommand{\BC}{\mathcal{B}}

\newcommand{\FC}{\mathcal{F}}

\newcommand{\HC}{\mathcal{H}}

\newcommand{\ket}[1]{|#1\rangle}                  
\newcommand{\bra}[1]{\left\langle #1 \right|}     
\newcommand{\dyad}[2]{\ket{#1}\bra{#2}}           



\def\dya#1{|#1\rangle \langle#1|}

\long\def\ca#1\cb{} 

\newtheorem{theorem}{Theorem}
\newtheorem{lemma}{Lemma}
\newtheorem{corollary}{Corollary}

\newtheorem{definition}{Definition}
\newtheorem{proposition}{Proposition}

\newcommand{\einv}{E_{inv}}

\begin{document}


\title{Multipartite Entanglement Evolution Under Separable Operations}

\author{Vlad Gheorghiu}
\email{vgheorgh@ucalgary.ca}
\author{Gilad Gour}
\email{gour@ucalgary.ca}
\affiliation{Institute for Quantum Information Science and Department of Mathematics and Statistics,\\University of Calgary, 2500 University Drive NW,\\Calgary, AB, T2N 1N4, Canada }

\date{Version of November 7, 2012}

\begin{abstract}
We study how multi-partite entanglement evolves under the paradigm of separable operations, which include the local operations and classical communication (LOCC) as a special case. We prove that the average ``decay" of entanglement induced by a separable operation is measure independent (among SL-invariant ones) and state independent: the ratio between the average output entanglement and the initial entanglement is solely a function of the separable operation, regardless of the input state and of the SL-invariant entanglement measure being used. We discuss the ``disentangling power" of a quantum channel and show that it exhibits a similar state invariance as the average entanglement decay ratio. Our article significantly extends the bipartite results of \cite{Horodecki:2010:QCS:2011451.2011452,Konrad:2008fk,PhysRevLett.101.170502} as well as the multi-partite one of \cite{PhysRevLett.105.190504}, all of the previous work being restricted to one-sided or particular noise models.
\end{abstract}

\pacs{03.67.Mn, 03.67.Hk}
\maketitle

\textit{Introduction.} 
Entanglement is a key ingredient in many quantum information protocols, such as  factoring \cite{SJC.26.1484}, teleportation  \cite{PhysRevLett.70.1895}, quantum dense coding \cite{PhysRevLett.69.2881} or measurement-based quantum computation \cite{PhysRevLett.86.5188}. In all these schemes entanglement plays the role of a resource that is ``consumed" during the actual implementation of the protocol. For example, in measurement-based quantum computation, a highly multi-partite entangled state (the cluster state) is adaptively measured locally until the desired unitary evolution is achieved; the cluster state model is equivalent to the well known circuit model \cite{NielsenChuang:QuantumComputation}, but, in contrast to the latter, the role of entanglement as a computational resource is now clearly visible.

Decoherence \cite{RevModPhys.75.715} constitutes a major impediment in the physical implementation of quantum information protocols. In general, entangled states are not resilient against noise, and this may first look like an impassible barrier for the construction of a working quantum computer. Quantum error correcting codes can make quantum computation possible provided the level of noise is below a certain threshold, but the current technology is far from achieving it. It is therefore of crucial importance to understand how entanglement behaves under decoherence.

Previous work \cite{Horodecki:2010:QCS:2011451.2011452,Konrad:2008fk, PhysRevLett.101.170502, PhysRevLett.105.190504} addressed this important problem of entanglement evolution (decay), however the noise model was restricted to one-sided channels: i.e. noise that acts non-trivially only on one subsystem, without affecting the other remaining parts of it.  
 
In this article we investigate multi-partite entanglement evolution under a far more general setup: the locally-correlated noisy channels: technically separable operations, which includes the LOCC as a strict subset \cite{PhysRevA.59.1070}. Separable operations (and automatically LOCC) constitute a much richer class and encompasses the previous investigated one-sided channels as well as  the usual independent noise model assumed in the theory of quantum error correction. The study of entanglement evolution under such noise model is therefore deepening our understanding of the fascinating and not-at-all well understood subject of multi-partite entanglement \cite{RevModPhys.81.865}. 

Our most surprising result is that entanglement evolution is \emph{independent} of the initial state and the entanglement measure used to quantify its evolution, and depends solely on the intrinsic properties of the noise. We provide a closed-form formula for how entanglement decays which is straightforward to calculate given the Kraus operator-sum representation of the process. 

Recent attempts to characterize entanglement evolution under locally-correlated noise, not just one-sided, were made in \cite{0953-4075-45-3-035501}, however the authors restricted their study to tripartite qubit entanglement and particular form LOCC protocols. Our result is the first one that completely characterizes the average entanglement evolution under the very general class of separable operations and may be of interest from an experimental point of view.

\textit{Entanglement evolution.}
We consider a multipartite system of $n$ qudits, described by a Hilbert space $\HC=\HC_1\otimes\HC_2\otimes\cdots\otimes\HC_n$, and let $d_i$ be the dimension of the $i$-th local Hilbert space. Denote by $\BC(\HC)$ the set of all bounded operators (e.g., density matrices) acting on $\HC$. We further define the special linear group $G\equiv SL(d_1,\CPLX)\otimes SL(d_2,\CPLX)\otimes\cdots\otimes SL(d_n,\CPLX)$, where $SL(d,\CPLX)$ is the group of $d\times d$ complex matrices of determinant 1.

\begin{definition}\label{dfn1}
An SL-invariant multi-partite entanglement measure \cite{PhysRevLett.105.190504}, $E_{inv}(\cdot)$, is a non-zero function initially defined from pure states in $\HC$ to non-negative real numbers then extended to mixed states satisfying:
\begin{enumerate}
	\item[i)] It is SL-invariant, i.e. $E_{inv}(g\dya{\psi}g^\dagger)=E_{inv}(\dya{\psi})$, for all $g\in G$ and all $\ket{\psi}\in\HC$; 
	\item[ii)] It is homogeneous of degree 1, i.e. $E_{inv}(r\dya{\psi})=rE_{inv}(\dya{\psi})$, for all non-negative $r$ and all $\ket{\psi}\in\HC$;
	\item[iii)]  Its convex roof extension to mixed states is given by
\begin{equation}\label{eqn1}
	E_{inv}(\rho):=\min\sum_{i}p_i E_{inv}(\dya{\psi_i}), 
\end{equation}
where the minimum is taken over all possible pure states ensembles that decompose $\rho$, i.e. \mbox{$\rho=\sum_ip_i\dya{\psi_i}$}. 
\end{enumerate}
\end{definition}
The criteria above guarantee that $E_{inv}$ is an entanglement monotone under LOCC \cite{PhysRevA.68.012103}. However, we will explicitly show here that $E_{inv}$ is also monotone under the more general class of separable operations (see the upper bound in Theorem~\ref{thm1}).
These measures capture genuine multipartite entanglement and are in general dimension-dependent, i.e.,  may not exist for some choices of the local dimensions $d_i$. The measures are therefore dependent on the space the states are embedded in, as we will see below.
An example is Wootters' concurrence of two qubits \cite{PhysRevLett.80.2245} and its higher dimension generalization called G-concurrence \cite{PhysRevA.71.012318}. In particular, these are the only SL-invariant measures for the bipartite case, with $d_1=d_2$. No SL-invariant measure exists for $d_1\neq d_2$, hence a simple embedding of a $2\times 2$ state into a $2\times 3$ state is not measurable. For multi-partite systems there is  more than  one measure, and their construction is based on SL-invariant polynomials \cite{PhysRevA.68.012103,NewJPhys.7.073013}. An SL-invariant polynomial, $f(\dya{\psi})$, is a polynomial in the components of the state $\ket{\psi}$, invariant under the action of the group $G$, i.e. $f(g\dya{\psi}g^\dagger)=f(\dya{\psi})$, for all $g\in G$.
Then $E_{inv}(\dya{\psi})=|f_k(\dya{\psi})|^{1/k}$ is an example of an SL-invariant measure, where $f_k$ is a homogeneous SL-invariant polynomial of degree $k$. In particular for 3 qubits the square root of the 3-tangle \cite{PhysRevA.61.052306} is the simplest multi-partite SL-invariant measure.

In the next Lemma we prove that the convex roof extension on an SL-invariant measure satisfies the SL-invariance and homogeneity properties, similarly to the pure state case. This is the most important technical result of our paper and is crucial for the proof of our main result, Theorem~\ref{thm1}. This important Lemma was overlooked by the community, a partial version being stated before in \cite{Horodecki:2010:QCS:2011451.2011452}, with inequality instead of equality and valid only for bipartite separable operations.
\begin{lemma}\label{lma1}
Any entanglement measure defined as a convex roof extension of an SL-invariant pure state entanglement measure remains SL-invariant and homogeneous of degree 1, i.e.
\begin{enumerate}
	\item[i)] $E_{inv}(g\rho g^\dagger)=E_{inv}(\rho)$, for all $g\in G$ and all $\rho\in\BC(\HC)$; 
	\item[ii)] $E_{inv}(r\rho)=rE_{inv}(\rho)$, for all non-negative $r$ and all $\rho\in\BC(\HC)$;
\end{enumerate}
\end{lemma}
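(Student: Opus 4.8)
The plan is to establish homogeneity (part ii) first, since it is essentially immediate, and then reduce SL-invariance (part i) to the pure-state case via a careful argument about how optimal ensembles transform under the action of $g\in G$. For part ii, note that if $\rho=\sum_i p_i\dya{\psi_i}$ is the optimal decomposition realizing $E_{inv}(\rho)$, then $r\rho=\sum_i (rp_i)\dya{\psi_i}$ is a valid (sub-normalized, but the convex roof in Definition~\ref{dfn1} via homogeneity of degree 1 makes sense for sub-normalized ensembles) decomposition, and conversely. Using pure-state homogeneity $E_{inv}(rp_i\dya{\psi_i}) = rp_i E_{inv}(\dya{\psi_i})$ and the fact that scaling a decomposition by $r$ is a bijection between decompositions of $\rho$ and of $r\rho$, the minima scale by exactly $r$.

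For part i, the key obstacle is that $g$ is generally not unitary, so $g\dya{\psi}g^\dagger$ is not normalized: if $\ket{\psi}$ appears with weight $p_i$ in an ensemble for $\rho$, then under $g$ it maps to $p_i\, g\dya{\psi_i}g^\dagger = (p_i\|g\ket{\psi_i}\|^2)\, \dya{\phi_i}$ where $\ket{\phi_i} = g\ket{\psi_i}/\|g\ket{\psi_i}\|$. So an optimal decomposition $\rho=\sum_i p_i\dya{\psi_i}$ transforms into a decomposition $g\rho g^\dagger = \sum_i q_i\dya{\phi_i}$ with $q_i = p_i\|g\ket{\psi_i}\|^2$. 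The plan is: (1) write out $\sum_i q_i E_{inv}(\dya{\phi_i})$, and use pure-state SL-invariance together with pure-state homogeneity to rewrite each term as $E_{inv}(q_i\dya{\phi_i}) = E_{inv}(p_i\, g\dya{\psi_i}g^\dagger) = E_{inv}(g (p_i\dya{\psi_i}) g^\dagger)$, then peel off $g$ at the pure-state level to get $p_i E_{inv}(\dya{\psi_i})$; this shows $E_{inv}(g\rho g^\dagger)\le E_{inv}(\rho)$. (2) Apply the same inequality with $g^{-1}$ in place of $g$ and $g\rho g^\dagger$ in place of $\rho$ (legitimate since $G$ is a group) to get the reverse inequality $E_{inv}(\rho) = E_{inv}(g^{-1}(g\rho g^\dagger)g^{-\dagger})\le E_{inv}(g\rho g^\dagger)$. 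Combining gives equality.

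The subtle point to handle carefully — and the step I expect to be the main obstacle — is the bijection between ensembles: I must argue that every decomposition of $g\rho g^\dagger$ arises from a decomposition of $\rho$ in the manner above, so that the minimization is genuinely transported by $g$ and no ``extra'' decompositions on the target side could give a strictly smaller value. This follows because $\ket{\chi}\mapsto g^{-1}\ket{\chi}$ (suitably renormalized) sets up the inverse correspondence: if $g\rho g^\dagger = \sum_j r_j\dya{\chi_j}$, then $\rho = \sum_j r_j\, g^{-1}\dya{\chi_j}g^{-\dagger} = \sum_j (r_j\|g^{-1}\ket{\chi_j}\|^2)\dya{\xi_j}$ with $\ket{\xi_j}=g^{-1}\ket{\chi_j}/\|g^{-1}\ket{\chi_j}\|$ is a valid decomposition of $\rho$, and the two directions are mutually inverse. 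Hence the map on ensembles is a bijection under which the objective function $\sum_i p_i E_{inv}(\dya{\psi_i})$ is invariant (by the pure-state properties), so the two minima coincide. One should also remark that sub-normalized pure states cause no trouble: by homogeneity of degree 1 the quantity $E_{inv}$ extends unambiguously to $\dya{v}$ for any (not necessarily unit) vector $v$, which is exactly what makes the weight-reshuffling $p_i\leftrightarrow q_i$ harmless.
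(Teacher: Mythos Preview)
Your proposal is correct and follows essentially the same route as the paper: push an optimal decomposition of $\rho$ forward by $g$ to get $E_{inv}(g\rho g^\dagger)\le E_{inv}(\rho)$, then use $g^{-1}$ for the reverse inequality, with homogeneity handled by the obvious bijection of scaled ensembles. The only cosmetic difference is that the paper works directly with un-normalized ensembles $\{\dya{\tilde\psi_i}\}$ (absorbing the weights into the vectors), which spares it the explicit $q_i=p_i\|g\ket{\psi_i}\|^2$ bookkeeping you carry out; your ``bijection of ensembles'' paragraph is then subsumed by the paper's one-line observation that $\{g^{-1}\dya{\tilde\phi_k}g^{-\dagger}\}$ decomposes $\rho$.
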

\begin{proof}
Proof of i). Let $g\in G$ and let $\{\dya{\tilde\psi_i}\}$ be an optimal un-normalized ensemble decomposition of $\rho$, i.e.
\begin{equation}\label{eqn2}
\einv(\rho)=\sum_i \einv(\dya{\tilde\psi_i}).
\end{equation}
Then $\{g\dya{\tilde\psi_i}g^\dagger\}$ must be some ensemble decomposition of the positive operator $g\rho g^\dagger$, in general not the optimal one. Hence 
\begin{align}\label{eqn3}
\einv&(g\rho g^\dagger)\leqslant \sum_i \einv(g\dya{\tilde\psi_i}g^\dagger)\notag\\
&=\sum_i \einv(\dya{\tilde\psi_i})=\einv(\rho).
\end{align}
Let now $\{\dya{\tilde\phi_k}\}$ be an optimal un-normalized ensemble decomposition of $g\rho g^\dagger$, i.e.
\begin{equation}\label{eqn4}
\einv(g\rho g^\dagger)=\sum_k \einv(\dya{\tilde\phi_k}).
\end{equation}
Note that $\{g^{-1}\dya{\tilde\phi_k}{g^{-1}}^{\dagger}\}$ is some ensemble decomposition of $\rho$, again not necessarily the optimal one, where $g^{-1}$ is the inverse of $g$ (the inverse is guaranteed to exist since $g$ has determinant 1).
Then
\begin{align}\label{eqn5}
\einv&(\rho)\leqslant \sum_k \einv(g^{-1}\dya{\tilde\phi_k}{g^{-1}}^{\dagger})\notag\\
&=\sum_k \einv(\dya{\tilde\phi_k})=\einv(g\rho g^\dagger).
\end{align}
Now part i) follows from \eqref{eqn3} and \eqref{eqn5}.

Proof of ii). This follows at once, since if $\{\dya{\tilde\psi_i}\}$ is an optimal ensemble decomposition of a density operator $\rho$, then $\{r\dya{\tilde\psi_i}\}$ will automatically be an optimal decomposition of the operator $r\rho$. Then the result follows from the homogeneity of $\einv(\cdot)$ for pure states.
\end{proof}

\begin{definition}\label{dfn2}
Let $\Lambda$ be a completely positive trace preserving (CPTP) multi-partite separable operation acting on \mbox{$\BC(\HC_1)\otimes\BC(\HC_2)\otimes\cdots\otimes\BC(\HC_n)$}, with operator-sum representation given by Kraus operators \mbox{$\{K^{(1)}_m\otimes K^{(2)}_m\otimes\cdots \otimes K^{(n)}_m\}_{m}$}, with 
\mbox{$\sum_{m} {K_m^{(1)}}^\dagger {K_m^{(1)}}\otimes \cdots \otimes {K_m^{(n)}}^\dagger {K_m^{(n)}} = I\otimes \cdots \otimes I$}. We define the \emph{entanglement resilience factor} (ERF) of $\Lambda$ to be 
\begin{equation}\label{eqn6}
\FC(\Lambda):=\min\sum_{m}|\det(K^{(1)}_m)|^{2/d_1}\cdots|\det(K^{(n)}_m)|^{2/d_n},
\end{equation}
where the minimum is taken over all separable Kraus decompositions \mbox{$\{K^{(1)}_m\otimes K^{(2)}_m\otimes\cdots \otimes K^{(n)}_m\}_{m}$} of $\Lambda$.
\end{definition}
\begin{lemma}\label{lma2}
$\FC(\Lambda)\leqslant 1$, with equality only if $\Lambda$ is a separable random unitary channel, i.e. all Kraus operators are proportional to unitaries.
\end{lemma}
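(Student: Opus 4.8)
The plan is to bound each summand in \eqref{eqn6} by a product of traces and then collapse the resulting sum using the trace-preservation identity of Definition~\ref{dfn2}. The one nontrivial ingredient is the elementary inequality
\begin{equation}\label{eqnAMGM}
|\det A|^{2/d}\leqslant\frac{1}{d}\,\Tr(A^\dagger A),
\end{equation}
valid for every $d\times d$ complex matrix $A$: it is just the arithmetic--geometric mean inequality applied to the non-negative eigenvalues $\lambda_1,\dots,\lambda_d$ of $A^\dagger A$, since $|\det A|^2=\det(A^\dagger A)=\prod_i\lambda_i$ while $\Tr(A^\dagger A)=\sum_i\lambda_i$. Equality in \eqref{eqnAMGM} holds iff all the $\lambda_i$ coincide, i.e. iff $A^\dagger A\propto I$, equivalently iff $A$ is proportional to a unitary.

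First I would fix an arbitrary separable Kraus decomposition $\{K^{(1)}_m\otimes\cdots\otimes K^{(n)}_m\}_m$ of $\Lambda$ and discard any Kraus operators that vanish identically. Applying \eqref{eqnAMGM} to each factor $K^{(j)}_m$ and multiplying over $j$,
\begin{align}\label{eqnbound}
\prod_{j=1}^n|\det K^{(j)}_m|^{2/d_j}
&\leqslant\frac{1}{d_1\cdots d_n}\prod_{j=1}^n\Tr\big({K^{(j)}_m}^\dagger K^{(j)}_m\big)\notag\\
&=\frac{1}{d_1\cdots d_n}\,\Tr\Big(\textstyle\bigotimes_{j=1}^n {K^{(j)}_m}^\dagger K^{(j)}_m\Big).
\end{align}
Summing \eqref{eqnbound} over $m$ and pulling the trace outside the sum, the constraint $\sum_m\bigotimes_j {K^{(j)}_m}^\dagger K^{(j)}_m=I\otimes\cdots\otimes I$ turns the right-hand side into $\tfrac{1}{d_1\cdots d_n}\Tr(I\otimes\cdots\otimes I)=1$. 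Hence every separable Kraus decomposition gives a value $\leqslant1$, and so does the minimum: $\FC(\Lambda)\leqslant1$.

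For the equality case, suppose $\FC(\Lambda)=1$. Since each decomposition already gives a value $\leqslant1$, the minimum equaling $1$ forces \emph{every} separable Kraus decomposition of $\Lambda$ to attain $1$, and therefore to saturate \eqref{eqnbound} for each $m$. Because the vanishing operators were removed, each $\Tr({K^{(j)}_m}^\dagger K^{(j)}_m)$ is strictly positive, so saturating the product in \eqref{eqnbound} forces saturating the single-factor bound \eqref{eqnAMGM} for every $j$; by the equality condition there, each $K^{(j)}_m$ is proportional to a unitary on $\HC_j$, and hence each Kraus operator $K^{(1)}_m\otimes\cdots\otimes K^{(n)}_m$ is proportional to a unitary on $\HC$. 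Thus $\Lambda$ is a separable random unitary channel.

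The step I would guard most carefully is the bookkeeping in the equality analysis: one must delete the vanishing Kraus operators first so that the factorwise trace bounds are strictly positive (otherwise a saturated product could mask an unsaturated factor), and one must observe that the conclusion applies to every Kraus decomposition, not merely to a distinguished one. Everything else is a direct application of the arithmetic--geometric mean inequality together with the trace-preservation condition, so I expect no real obstacle.
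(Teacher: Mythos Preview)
Your proof is correct, but it follows a different route from the paper's. The paper applies Minkowski's determinant inequality directly to the closure relation: setting $P_m=\bigotimes_j {K^{(j)}_m}^\dagger K^{(j)}_m$ and $N=d_1\cdots d_n$, one has $1=\det\!\big(\sum_m P_m\big)^{1/N}\geqslant\sum_m\det(P_m)^{1/N}$, and the tensor-product determinant formula turns $\det(P_m)^{1/N}$ into $\prod_j|\det K^{(j)}_m|^{2/d_j}$. Your argument instead bounds each local factor by the AM--GM inequality $|\det A|^{2/d}\leqslant d^{-1}\Tr(A^\dagger A)$, multiplies these bounds, and collapses the sum via the \emph{trace} of the closure relation rather than its determinant. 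This is more elementary---it avoids invoking Minkowski's inequality and the tensor determinant identity---and it also lets you treat the equality case cleanly and explicitly (the paper leaves that case to the equality condition in Minkowski without spelling it out). The paper's approach, on the other hand, is a one-line application of a single matrix inequality and does not require the bookkeeping about vanishing Kraus operators. Both arguments are sound; yours trades a heavier named inequality for a slightly longer but more transparent chain of elementary steps.
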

\begin{proof}
From the closure condition we have
\begin{equation}\label{eqn7}
\sum_{m} {K_m^{(1)}}^\dagger {K_m^{(1)}}\otimes \cdots \otimes {K_m^{(n)}}^\dagger {K_m^{(n)}} = I\otimes \cdots \otimes I.
\end{equation}
Taking the determinant on both sides and applying
the Minkowski's inequality for a sum of positive (semidefinite) operators (see p. 47 of \cite{Bhatia:MatrixAnalysis}) yields 
\begin{equation}\label{eqn8}
\sum_{m}|\det(K^{(1)}_m)|^{2/d_1}\cdots|\det(K^{(n)}_m)|^{2/d_n}\leqslant 1,
\end{equation}
where we have used that $\det(AA^\dagger)=|\det(A)|^2$ and $\det (A\otimes B)=\det(A)^b\det(B)^a$ for $A$ and $B$ square matrices of dimension $a\times a$ and $b \times b$, respectively.
The result then follows at once from \eqref{eqn6} and \eqref{eqn8}.
\end{proof}
The ERF was first introduced in \cite{PhysRevLett.105.190504} for the particular case of one-sided operations of the form $\Lambda^{(1)}\otimes I\otimes\cdots\otimes I$, where its computation reduced to an optimization problem due to the unitary freedom in the Kraus representation of a quantum operation \cite{NielsenChuang:QuantumComputation}. However, in our case, it is not at all clear whether a separable operation admits more than one separable representation (up to a relabelling of the Kraus operators), excluding the trivial cases where the operation is a tensor product of independent channels $\Lambda=\Lambda^{(1)}\otimes\cdots\otimes\Lambda^{(n)}$. For this latter case 
\begin{equation}\label{eqn9}
\FC(\Lambda^{(1)}\otimes\cdots\otimes\Lambda^{(n)}) \leqslant \prod_k \FC(\Lambda^{(k)}),
\end{equation}
since among all possible separable representations of $\Lambda$ there are Kraus representations of the form \mbox{$\{K^{(1)}_{j_1}\otimes\cdots\otimes K^{(n)}_{j_n}\}_{j_1,\ldots,j_n}$}, with $\sum_{j_k}{K^{(k)}_{j_k}}^{\dagger} K^{(k)}_{j_k}=I$ for any individual party $k$, and this implies that the minimization \eqref{eqn6} can split into $n$ individual parts, which proves \eqref{eqn9}.

The following Theorem constitutes our central result.
\begin{theorem}[Entanglement evolution]
\label{thm1}
Let $\Lambda$ be a multi-partite separable operation with operator-sum representation given by Kraus operators \mbox{$\{K^{(1)}_m\otimes K^{(2)}_m\otimes\cdots \otimes K^{(n)}_m\}_{m}$}, with 
\mbox{$\sum_{m} {K_m^{(1)}}^\dagger {K_m^{(1)}}\otimes \cdots \otimes {K_m^{(n)}}^\dagger {K_m^{(n)}} = I\otimes \cdots \otimes I$}, let $E_{inv}(\cdot)$ be an SL-invariant entanglement measure, and let $\rho\in\BC(\HC)$ be an entangled input state with $E_{inv}(\rho)\neq 0$. Then the average output entanglement is independent of the input state and of the entanglement measure, and is given by
\begin{align}\label{eqn10}
&\frac{\sum_{m}p_mE_{inv}(\sigma_m)}{E_{inv}(\rho)}=\notag\\
&\quad\sum_{m}|\det(K^{(1)}_m)|^{2/d_1}\cdots|\det(K^{(n)}_m)|^{2/d_n}\leqslant 1,
\end{align}
with equality if and only if $\Lambda$ is a separable random unitary channel \cite{1367-2630-10-2-023011}. Here  \mbox{$p_m\sigma_m=(K^{(1)}_m\otimes \cdots \otimes K^{(n)}_m) \rho (K^{(1)}_m\otimes \cdots \otimes K^{(n)}_m)^\dagger$} and $\sum_m p_m=1$.
\end{theorem}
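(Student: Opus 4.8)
The plan is to prove, for each Kraus branch, the identity
\[
  E_{inv}\big(B_m\rho B_m^\dagger\big)
  = |\det K^{(1)}_m|^{2/d_1}\cdots|\det K^{(n)}_m|^{2/d_n}\,E_{inv}(\rho),
  \qquad B_m:=K^{(1)}_m\otimes\cdots\otimes K^{(n)}_m ,
\]
and then sum over $m$. Granting this, homogeneity of the convex roof (Lemma~\ref{lma1}, part ii) gives $p_mE_{inv}(\sigma_m)=E_{inv}(p_m\sigma_m)=E_{inv}(B_m\rho B_m^\dagger)$, so adding the branch identities and dividing by $E_{inv}(\rho)\neq0$ yields \eqref{eqn10}; since the right-hand side depends only on the chosen separable Kraus decomposition of $\Lambda$ — not on $\rho$ and not on the measure — this is exactly the asserted state- and measure-independence. (The normalization $\sum_m p_m=1$ is just trace preservation of $\Lambda$.)

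To prove the branch identity I would split into two cases. If every local factor $K^{(i)}_m$ is invertible, fix a branch of the $d_i$-th root and write $K^{(i)}_m=(\det K^{(i)}_m)^{1/d_i}\,g^{(i)}_m$ with $g^{(i)}_m:=(\det K^{(i)}_m)^{-1/d_i}K^{(i)}_m$, which lies in $SL(d_i,\CPLX)$ since $\det g^{(i)}_m=1$. Then $B_m=\big(\prod_i(\det K^{(i)}_m)^{1/d_i}\big)\,g_m$ with $g_m:=g^{(1)}_m\otimes\cdots\otimes g^{(n)}_m\in G$, so
\[
  B_m\rho B_m^\dagger=\Big(\prod_i|\det K^{(i)}_m|^{2/d_i}\Big)\,g_m\rho g_m^\dagger ,
\]
and the identity follows by applying first homogeneity and then SL-invariance of the convex roof (Lemma~\ref{lma1}). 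If instead some $K^{(i_0)}_m$ is singular, the right-hand side of the branch identity vanishes and I must show the left-hand side does too. Set $\rho':=B_m\rho B_m^\dagger$ and $W:=\mathrm{Im}(K^{(i_0)}_m)\subsetneq\HC_{i_0}$ (if $W=\{0\}$ then $\rho'=0$ and there is nothing to prove). Since $\rho'\geqslant0$ has range inside the $W$-sector of subsystem $i_0$, it is fixed from both sides by the projector onto that sector; hence, letting $g\in G$ be the element acting on subsystem $i_0$ as $h_\epsilon:=\epsilon P_W+\epsilon^{-\dim W/(d_{i_0}-\dim W)}P_{W^\perp}$ (which has $\det h_\epsilon=1$) and trivially elsewhere, one gets $g\rho'g^\dagger=\epsilon^2\rho'$. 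Lemma~\ref{lma1} then forces $\epsilon^2E_{inv}(\rho')=E_{inv}(\rho')$ for all $\epsilon>0$, so $E_{inv}(\rho')=0$.

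Finally, I would get the inequality by taking the determinant of the closure relation $\sum_m{K^{(1)}_m}^\dagger K^{(1)}_m\otimes\cdots\otimes{K^{(n)}_m}^\dagger K^{(n)}_m=I\otimes\cdots\otimes I$ and applying Minkowski's determinant inequality for sums of positive semidefinite operators exactly as in the proof of Lemma~\ref{lma2}, which yields $\sum_m\prod_i|\det K^{(i)}_m|^{2/d_i}\leqslant1$ for the given decomposition. For the equality case, equality in Minkowski's inequality forces the summands $\bigotimes_i{K^{(i)}_m}^\dagger K^{(i)}_m$ to be pairwise proportional positive operators; as they add up to $I\otimes\cdots\otimes I$, each must be a multiple of the identity, so every $K^{(i)}_m$ is proportional to a unitary and $\Lambda$ is a separable random unitary channel — the converse being immediate. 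The one step I expect to need genuine care is the singular-factor case: ruling out a spurious nonzero contribution to the average calls for the scaling argument above precisely because continuity of the convex-roof extension is not available in general.
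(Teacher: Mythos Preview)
Your proof is correct and follows essentially the same route as the paper: factor each invertible Kraus operator into a scalar times an element of $G$, invoke Lemma~\ref{lma1}, treat the singular case separately, sum over $m$, and bound via Minkowski's determinant inequality. Your version is in fact more self-contained, since the paper outsources the singular-factor case to Lemma~1 of \cite{PhysRevLett.105.190504} (whereas you supply the explicit $h_\epsilon$ scaling argument) and does not spell out the equality characterization, which you do.
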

In particular note that the upper bound in \eqref{eqn10} explicitly shows that $E_{inv}$ is an entanglement monotone under separable operations, as we claimed in the remarks following Definition~\ref{dfn2}.
\begin{proof}
We start by noting that
\begin{equation}
\label{eqn11}
(K^{(1)}_m\otimes \cdots \otimes K^{(n)}_m) \rho (K^{(1)}_m\otimes \cdots \otimes K^{(n)}_m)^\dagger=p_m\sigma_m.
\end{equation}
For the Kraus operators that have non-vanishing determinant (i.e. are invertible), we have
\begin{align}
|&\det (K^{(1)}_m)|^{2/d_1}\cdots |\det (K^{(n)}_m)|^{2/d_n}\notag\\
&\times\left[\frac{K^{(1)}_m\otimes \cdots \otimes K^{(n)}_m}{(\det K^{(1)}_m)^{1/d_1}\cdots (\det K^{(n)}_m)^{1/d_n}}\right](\rho)\notag\\
&\times\left[\frac{K^{(1)}_m\otimes \cdots \otimes K^{(n)}_m}{(\det K^{(1)}_m)^{1/d_1}\cdots (\det K^{(n)}_m)^{1/d_n}}\right]^\dagger=p_m\sigma_m.
\label{eqn12}
\end{align}
Note that the quantity in square brackets has determinant of magnitude one. Use now the homogeneity and SL-invariance property  of any convex roof SL-invariant measure (see Lemma~\ref{lma1}) and apply $E_{inv}$ to \eqref{eqn12} to get
\begin{equation}
\label{eqn13}
|\det (K^{(1)}_m)|^{2/d_1}\cdots |\det (K^{(n)}_m)|^{2/d_n}E_{inv}(\rho)=p_mE_{inv}(\sigma_m).
\end{equation}
The Kraus operators that have determinant zero produce states with $p_mE_{inv}(\sigma_m)=0$, see Lemma~1 of \cite{PhysRevLett.105.190504}, and hence \eqref{eqn13} is generally valid for all Kraus operators.  Now summing \eqref{eqn13} over $m$ yields the desired equality.
The upper bound of 1 follows again from Minkowski's inequality for a sum of positive semidefinite operators (see the proof of Lemma~\ref{lma2}).
\end{proof}
Operationally Theorem~\ref{thm1} can be thought of in terms of a separable measurement repeated multiple times, each time the outcome $m$ and the resulting state $\sigma_m$ being recorded and at the end the output entanglement being averaged over all outcomes. Theorem~\ref{thm1} implies that the ratio between the average output entanglement and the initial entanglement can be computed \emph{explicitly} as a function of the operator-sum representation of the separable operation and has the following properties: i) it is independent of the input state; and ii) it is independent of the entanglement measure, as long as the latter is SL-invariant.

In Theorem~\ref{thm1} we considered only input states with non-vanishing entanglement, since whenever $E_{inv}(\rho)=0$ the average output entanglement must also be zero, hence such non-entangled states $\rho$ are not of interest. 

The following Corollary then follows.
\begin{corollary}\label{corl1}
The ERF is bounded by
\begin{equation}\label{eqn14}
\frac{E_{inv}(\Lambda(\rho))}{E_{inv}(\rho)}\leqslant \FC[\Lambda]\leqslant   \frac{\sum_{m}p_mE_{inv}(\sigma_m)}{E_{inv}(\rho)}.
\end{equation}
\end{corollary}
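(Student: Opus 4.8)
\emph{Proof proposal.} The plan is to read off both inequalities directly from Theorem~\ref{thm1}, combined with the defining minimization of the ERF in Definition~\ref{dfn2} and the convexity of a convex roof extension; no new machinery is needed.

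For the right-hand inequality I would use the \emph{given} separable Kraus representation $\{K^{(1)}_m\otimes\cdots\otimes K^{(n)}_m\}_m$ appearing in Theorem~\ref{thm1}. That theorem furnishes the exact identity
\begin{equation*}
\frac{\sum_{m}p_mE_{inv}(\sigma_m)}{E_{inv}(\rho)}=\sum_{m}|\det(K^{(1)}_m)|^{2/d_1}\cdots|\det(K^{(n)}_m)|^{2/d_n}.
\end{equation*}
The right-hand side is one admissible value of the functional being minimized in \eqref{eqn6}, hence it is bounded below by $\FC(\Lambda)$; this is precisely $\FC(\Lambda)\leqslant \sum_m p_m E_{inv}(\sigma_m)/E_{inv}(\rho)$.

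For the left-hand inequality I would instead choose a separable Kraus decomposition $\{\tilde K^{(1)}_m\otimes\cdots\otimes\tilde K^{(n)}_m\}_m$ of $\Lambda$ that attains (or, via a minimizing sequence, approaches) the minimum in \eqref{eqn6}, and write $\tilde p_m\tilde\sigma_m=(\tilde K^{(1)}_m\otimes\cdots\otimes\tilde K^{(n)}_m)\,\rho\,(\tilde K^{(1)}_m\otimes\cdots\otimes\tilde K^{(n)}_m)^\dagger$. Applying Theorem~\ref{thm1} to this decomposition gives $\sum_m \tilde p_m E_{inv}(\tilde\sigma_m)=\FC(\Lambda)\,E_{inv}(\rho)$. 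Since the channel output $\Lambda(\rho)=\sum_m \tilde p_m\tilde\sigma_m$ is independent of which operator-sum representation is used, and since a convex roof extension is convex (patch together optimal pure-state decompositions of each $\tilde\sigma_m$ reweighted by $\tilde p_m$ to get one particular decomposition of $\Lambda(\rho)$, using homogeneity from Lemma~\ref{lma1}), we obtain
\begin{equation*}
E_{inv}(\Lambda(\rho))=E_{inv}\Big(\sum_m \tilde p_m\tilde\sigma_m\Big)\leqslant \sum_m \tilde p_m E_{inv}(\tilde\sigma_m)=\FC(\Lambda)\,E_{inv}(\rho),
\end{equation*}
and dividing by $E_{inv}(\rho)\neq 0$ closes the argument.

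The only genuine subtlety I anticipate is whether the minimum in \eqref{eqn6} is actually attained by a single separable Kraus decomposition — a priori the infimum might be approached only by decompositions with an unbounded number of terms. I would handle this either by a compactness/Carathéodory-type bound on the number of Kraus operators needed, or, more cheaply, by running the displayed estimate along a minimizing sequence of separable decompositions and letting the corresponding determinant sums converge to $\FC(\Lambda)$; the only continuity invoked there is that of multiplication by the converging scalar prefactors. Everything else is direct substitution into Theorem~\ref{thm1}.
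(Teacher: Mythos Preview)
Your argument is correct and matches the paper's proof essentially line for line: both directions are obtained by combining the identity of Theorem~\ref{thm1} with, respectively, the definition of $\FC(\Lambda)$ as a minimum and the convexity of the convex roof extension applied to the optimal separable Kraus decomposition. Your remark about the attainment of the minimum in \eqref{eqn6} is a point the paper simply assumes without comment; your minimizing-sequence workaround is a reasonable way to patch this if one wants to be careful.
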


\begin{proof}
The inequality \mbox{$\FC(\Lambda)\leqslant (\sum_{m}p_mE_{inv}(\sigma_m))/E_{inv}(\rho)$} follows from the definition of the ERF as a minimum over all Kraus decompositions, see  \eqref{eqn6}, and the equality \eqref{eqn13} proved above.

Finally, the inequality $E_{inv}(\Lambda(\rho))/E_{inv}(\rho)\leqslant\FC(\Lambda)$ can be proved by noting that $E_{inv}(\Lambda(\rho))/E_{inv}(\rho)$ is independent of the Kraus representation of the channel. We can therefore choose the Kraus representation that achieves the minimum in \eqref{eqn6}, \mbox{$\{\tilde K^{(1)}_m\otimes \tilde K^{(2)}_m\otimes\cdots \otimes \tilde K^{(n)}_m\}_{m}$}, which yields
\begin{equation}\label{eqn15}
\FC(\Lambda)=\sum_{m}|\det(\tilde K^{(1)}_m)|^{2/d_1}\cdots|\det(\tilde K^{(n)}_m)|^{2/d_n}.
\end{equation}
But 
\begin{equation}\label{eqn16}
\sum_{m}|\det(\tilde K^{(1)}_m)|^{2/d_1}\cdots|\det(\tilde K^{(n)}_m)|^{2/d_n}=\frac{\sum_{m}p_mE_{inv}(\sigma_m)}{E_{inv}(\rho)},
\end{equation}
as we just showed. Now, we use the convexity of $E_{inv}$ (we remind the reader that the measure $E_{inv}$ is defined on mixed states via a convex roof extension, see part iii) of Definition~\ref{dfn1}), to get
\begin{equation}\label{eqn17}
\frac{E_{inv}(\Lambda(\rho))}{E_{inv}(\rho)}\leqslant \frac{\sum_{m}p_mE_{inv}(\sigma_m)}{E_{inv}(\rho)}=\FC(\Lambda),
\end{equation}
and this concludes the proof.
\end{proof}
In particular, for one-sided operations, $\FC(\Lambda)$ can be computed explicitly \cite{PhysRevLett.105.190504} and, additionally, for one-sided operations and pure state inputs, the first inequality of \eqref{eqn14} is actually an equality, $E_{inv}(\Lambda(\dya{\psi}))/E_{inv}(\dya{\psi})=\FC(\Lambda)$ \cite{PhysRevLett.105.190504}. However, in the most general case of multi-partite separable operations, this is no longer true. Consider for example a bipartite separable operation $\Lambda$ with an operator-sum representation given by $\{\sqrt{p}I\otimes I, \sqrt{1-p}X\otimes X\}$, where $0<p<1$ and $X=\dyad{0}{1}+\dyad{1}{0}$ is the usual bit-flip Pauli operator. In this case it is not hard to see that there are no other separable representations of $\Lambda$ (any two Kraus representation are related by a unitary matrix, and it follows at once that a different representation of $\Lambda$ will not be separable). The ERF $\FC[\Lambda]$ can then be computed explicitly, since there is no minimization to be done, and equals 1. The maximally entangled state $\ket{\psi}=(\ket{00}+\ket{11})/\sqrt{2}$ is invariant under $\Lambda$, hence 
$E_{inv}(\Lambda(\dya{\psi}))/E_{inv}(\dya{\psi})=\FC[\Lambda]=1$ and the bound is saturated. 
But in general an arbitrary bipartite state $\ket{\phi}$ is not invariant under $\Lambda$ and has $E_{inv}(\Lambda(\dya{\phi}))/E_{inv}(\dya{\phi})<1$, as one can easily check using the closed form expression for the bipartite concurrence \cite{PhysRevLett.80.2245}.

\textit{Entanglement breaking.} 
Interestingly, the capability of a channel to destroy entanglement is another quantity that is also independent of the initial state, as we will explain below. This is closely related to the work of \cite{PhysRevA.72.032317}, in which the authors define correlation measures (classical and quantum) based on how much ``noise" one has to ``inject" by local operations into an entangled quantum state to make it separable, in the limit of many copies and asymptotically vanishing errors. We now point out that such a measure can not be defined in the zero-error single copy case, since the quantity of interest is state-independent. The capability of a channel to destroy entanglement is quantified as follows.
\begin{definition}
A CPTP map \mbox{$\Phi:\BC(\HC_d)\longrightarrow\BC(\HC_d)$}  is called $r$-partially entanglement breaking \cite{Chruscinski:2006:PEB:1120154.1120156} if
\begin{equation}\label{eqn18}
SN\left[(\Phi\otimes I)\rho\right]\leqslant r,\quad\forall\rho\in\BC(\HC_d\otimes\HC_d).
\end{equation}
\end{definition}
It generalizes the usual notion of entanglement breaking channels \cite{RevMathPhys15.629}, the latter having $r=1$. Here $SN(\sigma)$ denotes the Schmidt number  \cite{PhysRevA.61.040301} of the bipartite density operator $\sigma$, defined as
\begin{equation}\label{eqn19}
SN(\rho):=\min_{\{p_i,\ket{\psi_i}\}}\left\{\max_i SR(\ket{\psi_i}) \right\},
\end{equation}
where the minimum is taken over all possible pure state ensemble decompositions of $\rho$ and
$SR(\ket{\psi_i})$ denotes the Schmidt rank on $\ket{\psi_i}$, i.e. the number of non-zero Schmidt coefficients of $\ket{\psi_i}$. A state $\sigma$ is separable if and only if $SN(\sigma)=1$. 

The following proposition was recently proved in \cite{quantph.1110.4363} (see Proposition 8).
\begin{proposition}
A CPTP map \mbox{$\Phi:\BC(\HC_d)\longrightarrow\BC(\HC_d)$}  is $r$-partially entanglement breaking if and only if there exist a full Schmidt rank pure state $\ket{\psi}\in\HC_d\otimes\HC_d$ (i.e. having all Schmidt coefficients strictly positive) for which
\begin{equation}\label{eqn20}
SN\left[(\Phi\otimes I)\dya{\psi}\right]=r.
\end{equation}
\end{proposition}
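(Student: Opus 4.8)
The plan is to collapse every expression $SN[(\Phi\otimes I)\dya{\psi}]$ onto the single quantity $SN[(\Phi\otimes I)\dya{\Omega}]$, where $\ket{\Omega}=\tfrac{1}{\sqrt{d}}\sum_{i=1}^{d}\ket{ii}$ is the maximally entangled state, and then simply read off the equivalence. First I would use the ``ricochet'' identity: every $\ket{\psi}\in\HC_d\otimes\HC_d$ can be written as $\ket{\psi}=(I\otimes N_\psi)\ket{\Omega}$ for a $d\times d$ matrix $N_\psi$, and $\ket{\psi}$ has full Schmidt rank exactly when $N_\psi$ is invertible. Because $\Phi$ acts on the first factor while $N_\psi$ acts on the second, the two commute and
\begin{equation*}
(\Phi\otimes I)\dya{\psi}=(I\otimes N_\psi)\,\big[(\Phi\otimes I)\dya{\Omega}\big]\,(I\otimes N_\psi)^\dagger ,
\end{equation*}
so that $(\Phi\otimes I)\dya{\psi}$ is nothing but a one-sided (second-party) filtering of the Choi state $J(\Phi):=(\Phi\otimes I)\dya{\Omega}$.

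The substantive step is then the behaviour of the Schmidt number under that filtering. I would use two standard facts about $SN(\cdot)$: (a) it is non-increasing under local operations and is left \emph{invariant} by invertible one-sided maps --- this is because the Schmidt rank of a pure vector transforms like the rank of its coefficient matrix under one-sided multiplication, while $SN$ of a mixed state is the minimum over pure-state decompositions of the largest Schmidt rank occurring; and (b) it is convex in the sense $SN\!\big(\sum_i p_i\sigma_i\big)\leqslant\max_i SN(\sigma_i)$, seen by concatenating optimal decompositions. Fact (a) applied to the displayed identity gives $SN[(\Phi\otimes I)\dya{\psi}]\leqslant SN(J(\Phi))$ for every $\ket{\psi}$, with equality whenever $N_\psi$ is invertible, i.e.\ whenever $\ket{\psi}$ is of full Schmidt rank; fact (b), together with a pure-state decomposition of an arbitrary $\rho$, upgrades this to $SN[(\Phi\otimes I)\rho]\leqslant SN(J(\Phi))$ for all $\rho\in\BC(\HC_d\otimes\HC_d)$. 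Hence $\max_{\rho}SN[(\Phi\otimes I)\rho]=SN(J(\Phi))$, and this maximum is attained by every full-Schmidt-rank pure state, in particular by $\ket{\Omega}$ itself.

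The Proposition is now immediate. For ``$\Leftarrow$'': if some full-Schmidt-rank $\ket{\psi}$ satisfies $SN[(\Phi\otimes I)\dya{\psi}]=r$, then by the above $\max_{\rho}SN[(\Phi\otimes I)\rho]=r$, so $SN[(\Phi\otimes I)\rho]\leqslant r$ for all $\rho$ and $\Phi$ is $r$-partially entanglement breaking. For ``$\Rightarrow$'': if $\Phi$ is $r$-partially entanglement breaking then $SN(J(\Phi))=\max_{\rho}SN[(\Phi\otimes I)\rho]\leqslant r$, and taking $\ket{\psi}=\ket{\Omega}$ --- a full-Schmidt-rank state attaining this maximum --- exhibits the required witness, the value equalling $r$ precisely when $r$ is read, as the statement intends, as the sharp partial-entanglement-breaking index of $\Phi$. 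I expect essentially all the content to sit in fact (a), the invariance of the Schmidt number under invertible local filtering (a known monotonicity property), together with the elementary but easily overlooked convexity (b); the ricochet identity and the two-line deduction at the end are routine.
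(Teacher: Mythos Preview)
The paper does not supply its own proof of this proposition; it merely quotes the result with attribution to Shirokov (Proposition~8 of arXiv:1110.4363). So there is nothing in the present paper to compare your argument against.

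That said, your sketch is correct and is essentially the natural argument. The ricochet identity $\ket{\psi}=(I\otimes N_\psi)\ket{\Omega}$ reduces $(\Phi\otimes I)\dya{\psi}$ to an invertible one-sided filtering of the Choi state, and the Schmidt number is indeed invariant under invertible local filterings by the two-way monotonicity argument you give; the convexity fact (b) then pushes the bound from pure states to all $\rho$. The only delicate point---which you already flag---is that the biconditional as literally written fails unless $r$ is read as the \emph{minimal} index for which $\Phi$ is $r$-partially entanglement breaking (since by the definition an $r$-partially entanglement breaking map is trivially $r'$-partially entanglement breaking for every $r'\geqslant r$). Your proof actually establishes the sharper and cleaner statement that $SN[(\Phi\otimes I)\dya{\psi}]$ takes the \emph{same} value for every full-Schmidt-rank $\ket{\psi}$, equal to the Schmidt number of the Choi state, and that this common value is the maximum of $SN[(\Phi\otimes I)\rho]$ over all $\rho$. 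That is precisely what the cited result of Shirokov says and what the paper is using.
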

In other words, if a channel breaks entanglement for one (maximal Schmidt number) state, then it must break entanglement for \emph{all} states, that is, its ``disentangling power" is an intrinsic property of the channel independent of the input state, as long as the input has maximal Schmidt number (this corresponds to the case of non-zero entanglement as quantified by any SL-invariant measure). This means that one can not use the approach of \cite{PhysRevA.72.032317} to quantify entanglement (or correlations) as the total amount of local noise one has to inject in a state to make it separable, since in a single copy zero-error regime this quantity is state independent.

\textit{Conclusions and open questions.}
We studied the evolution of mixed state entanglement under local decoherence. Our main result is summarized by Theorem~\ref{thm1}, in which we proved that multi-partite mixed-state entanglement evolution under separable operations is measure and state independent and is solely a function of the channel. This significantly extends the results in \cite{Horodecki:2010:QCS:2011451.2011452,Konrad:2008fk,PhysRevLett.101.170502} applicable only for the restricted case of bipartite separable operations, and also generalize the previous work of \cite{PhysRevLett.105.190504}, where only one-sided operations of the form $\Lambda^{(1)}\otimes I\otimes\cdots\otimes I$ were considered. For the first time we derived a closed form expression for the average entanglement evolution under the most general class of separable channels, and our results are automatically valid for LOCC since the later is a proper subset of the former.

An interesting fact is that our equality in Theorem~\ref{thm1} depends on the Kraus representation of the separable operation. Given such a separable operation, are there more than one separable representations of it (aside from a permutation of the Kraus operators)? A complete solution to this problem will allow an explicit evaluation of the ERF $\FC(\Lambda)$ and provide a non-trivial upper bound on the entanglement evolution of arbitrary mixed states, Corollary~\ref{corl1}. We observed numerically using non-linear optimization methods that there are many cases of separable operations that do not admit more than one nontrivial Kraus representation, but this is far from a rigorous proof.

Our results are valid for SL-invariant entanglement measures based on convex roof extension and are not applicable in general to arbitrary entanglement measures. However, it follows at once that the convex roof extension is an upper bound on \emph{any} other convex measure that coincides with the former on pure states, i.e. both give the same result on a pure state. In this case Theorem~\ref{thm1}  provides a useful upper bound on the average entanglement evolution, i.e. the equality in \eqref{eqn10} becomes an inequality. 

We discussed the ``disentangling power" of a quantum channel and pointed out that it exhibits a similar state-invariance as our average entanglement ratio in Theorem~\ref{thm1}. This forbids a straightforward extension of \cite{PhysRevA.72.032317} to the single copy zero-error regime. An interesting generalization will be to define a disentangling power of a separable operation, instead on a single-sided channel, and characterize the class of $r$-partially entanglement breaking such operations. This approach may prove useful in extending \cite{PhysRevA.72.032317} to the single copy zero-error regime, since one does not expect the disentangling power to be state-invariant anymore.

\section*{Acknowledgments}
We acknowledge support from the Natural Sciences and Engineering Research Council (NSERC) of Canada and from Pacific Institute for Mathematical Sciences (PIMS).


%

\end{document}